\newcommand{\HH}{\mathcal{H}}
\newcommand{\RR}{\mathbb{R}}
\newcommand{\UU}{\mathcal{U}}
\newcommand{\kaon}{K^0}
\newcommand{\antikaon}{\bar{K}^0}
\newcommand\restr[2]{{
  \left.\kern-\nulldelimiterspace 
  #1 
  \vphantom{\big|} 
  \right|_{#2} 
  }}
\newcommand{\Inn}[2]{\langle #1, #2 \rangle}
\let\uppercasenonmath\@gobble
\newtheorem{fact}{Fact}
\newtheorem{prop}{Proposition}
\newtheorem*{coro}{Corollary}
\theoremstyle{definition}
\newtheorem*{defi}{Definition}
\title{Three Merry Roads to T-violation}
\author[Bryan W. Roberts]{Bryan W. Roberts\\ University of Southern California \\ \href{http://www.usc.edu/bryanroberts}{www.usc.edu/bryanroberts} \\ \\ \today}
\begin{document}
\maketitle
\begin{abstract}
This paper is a tour of how the laws of nature can distinguish between the past and the future, or be $T$-\emph{violating}. I argue that, in terms of basic analytic arguments, there are really just three approaches currently being explored. I show how each is characterized by a symmetry principle, which provides a template for detecting $T$-violating laws even without knowing the laws of physics themselves. Each approach is illustrated with an example, and the prospects of each are considered in extensions of particle physics beyond the standard model.
\end{abstract}

\section{Introduction}\label{introduction}

Unlike thermal physics, the physics of fundamental particles does not normally distinguish between the past and the future. For example, most classical mechanical systems never do. This dogma once ran so deep that, even after the shocking discovery of \citet{Wu1956a} that parity or ``mirror symmetry'' is violated, it remained difficult to imagine the violation of temporal symmetry. Many simply considered it to be an unavoidable aspect of quantum field systems, because of the great simplification it provided in the description of weakly interacting particles\footnote{Cf. \citet{Weinbe1958a} and \citet{Lederman1956K_Ldiscovery}. As James Cronin colorfully put it: ``It just seemed evident that CP symmetry should hold. People are very thick-skulled. We all are. Even though parity had been overthrown a few years before, one was quite confident about $CP$ symmetry'' \citep{CroninGreenw1982a}. In the presence of $CPT$-invariance, $CP$ symmetry is equivalent to $T$ symmetry.}.

Since then, a great deal of evidence has been accumulated showing that, contrary to the early views of particle physicists, fundamental physics can be $T$-\emph{violating} --- it \emph{does} distinguish between the past and the future! I do not wish to retell that story here. There are many sources\footnote{For a book-length overview, try \citet{kabir1968,sachs1987,kleinknecht2003cp,sozzi-CPviolation,BigiSanda2009CPviolation}.}, which are really much better than me, that will explain to you all about the gritty and ingenious detections of $T$-violating interactions, the deep and beautiful theory underlying them, and how we can expect that theory to develop from here.

At this conference, I would like to attempt a different project, which is to draw out the basic analytic arguments underlying the various approaches to $T$-violation. I would like to cast these arguments into their bare skeletal form; to think about what makes them alike and distinct; and to ask how they may fare as particle physics is extended beyond what we know today. In sum, this will be a cheerful tour -- from a birds eye view, if you like -- of the existing roads to $T$-violation.

There are, I think, two main benefits to this abstract perspective. The first is to show that there are really only three distinct roads to $T$-violation from where we stand today. Each one is characterized by a symmetry principle, and each is a deductive consequence of quantum mechanics and quantum field theory. The second benefit of the abstract perspective is that it illustrates the powerful generality of our evidence for $T$-violation. We will see in particular that these approaches allow us to test whether the laws of physics are $T$-violating, \emph{even when we don't know what the correct laws of physics are!} Here is a summary of the three approaches to $T$-violation.

	\begin{enumerate}
		\item \emph{$T$-Violation by Curie's Principle}. Pierre Curie declared that there is never an asymmetric effect without an asymmetric cause. This idea, together with the so-called $CPT$ theorem, provided the road to the very first detection of $T$-violation in the 20th century.
		\item \emph{$T$-Violation by Kabir's Principle}. Pasha Kabir pointed that, whenever the probability of an ordinary particle decay $A\rightarrow B$ differs from that of the time-reversed decay $B^\prime\rightarrow A^\prime$, then we have $T$-violation. This provides a second road.
		\item \emph{$T$-Violation by Wigner's Principle}. Certain kinds of matter, such as an elementary electric dipole, turn out to be $T$-violating because they have an appropriate non-degenerate energy state\footnote{An energy eigenstate is degenerate if there exists an orthogonal eigenstate with the same eigenvalue. I will discuss this property in more detail below.}. This provides the final road, although it has not yet led to a successful detection of $T$-violation.
	\end{enumerate}

In the next three sections, I will explain each of these three roads to $T$-violation. Some of these roads are very exciting and surprising, especially if you have not travelled down them before, and I will try to keep things light-hearted for the newcomer. My explanations will begin with a somewhat abstract formulation of an analytic principle, followed by an illustration how it provides a way to test for $T$-violation, and then an elementary mathematical treatment. I'll end each section with a little discussion about the prospects for extensions of particle physics beyond the standard model, and in particular extensions in which the dynamical laws are not unitary.

Let's start at the beginning.

\section{$T$-violation by Curie's Principle}\label{sec:T-violation-and-Curies-principle}

The first evidence that the laws governing weakly interacting systems are $T$-violating was produced, rather incredibly, in the mid-1960's. This was before the standard model was formulated. It was before a complete understanding of weak interactions. I think it's fair to say that we had little knowledge of the correct laws describing these systems whatsoever, if one takes ``the laws'' to be given by a Lagrangian or Hamiltonian. So how could we know the laws are $T$-violating? It was through a clever principle first pointed out by the great French physicist Pierre Curie, and adopted by James Cronin and Val Fitch in their surprising discovery. Here is that story.

	\subsection{Curie's principle}

In 1894, Pierre Curie argued that physicists really ought to be more like crystallographers, in treating certain symmetry principles like explicit laws of nature. He emphasized one symmetry principle in particular, which has come to be known as \emph{Curie's principle:}
\begin{quote}
	When certain effects show a certain asymmetry, this asymmetry must be found in the causes which gave rise to them. \citep{curie1894a}
\end{quote}

To begin, we'll need to sharpen the statement of Curie's Principle, by replacing the language of ``cause'' and ``effect'' with something more precise. An obvious choice in particle physics is to take an ``effect'' to be a quantum state. What then is a cause? A natural answer is: the \emph{other} states in the trajectory (e.g. the states that came before), together with the law describing how those states dynamically evolve. So, Curie's principle can be more clearly formulated:

\begin{quote}
	If a quantum state fails to have a linear symmetry, then that asymmetry must also be found in either the initial state, or else in the dynamical laws.
\end{quote}
This is a common interpretation of Curie's principle\footnote{C.f. \citep{earman2004curie}, \citep[\S 9.2.4]{mittelstaedt_weingartner_laws}.}. In fact it can be sharpened even more, and we will do so shortly. But first let's now see how it appears in the surprising discovery of Cronin and Fitch.

	\subsection{Application to $CP$-violation}

The Cronin and Fitch discovery of $T$-violation really goes back to an incredible work by \citet{gell-mann-pais1955}, which among other things introduces a version of Curie's Principle. They did not refer to it in this way, but I think you will see that the principle is unmistakably Curie's. Let's start with the example of \emph{charge conjugation} (CC) symmetry, which has the effect of transforming particles into their antiparticles and vice versa. Suppose we have two particle states $\theta_1$ and $\theta_2$; their interpretation is not important for this point\footnote{Gell-Mann and Pais used $\theta^0_1$ and $\theta^0_2$ to refer to what we know call the neutral kaon states $K_1$ and $K_2$, discussed in Footnote \ref{fn:long-short-life-kaons} below.}. And suppose the state $\theta_1$ is ``even'' under charge conjugation, in that $C\theta_1=\theta_1$, while the state $\theta_2$ is ``odd,'' in that $C\theta_2 = -\theta_2$. Then, Gell-Mann and Pais observed,
\begin{quote}
	according to the postulate of rigorous CC invariance, the quantum number $C$ is conserved in the decay; the $\theta_1^0$ must go into a state that is even under charge conjugation, while the $\theta_2^0$ must go into one that is odd. \citep[p.1389]{gell-mann-pais1955}.
\end{quote}
Given $C$-symmetric laws, a $C$-symmetric state must evolve to another $C$-symmetric state. Or, reformulating this claim in another equivalent form: if a $C$-symmetric state evolves to a $C$-asymmetric state, \emph{then the laws themselves must be $C$-violating}. That's a neat way to test for symmetry violation. And it's a simple application of Curie's Principle.

Although Gell-Mann and Pais were discussing $C$-symmetry, the same reasoning applies to any linear symmetry whatsoever. In particular, it applies to $CP$-symmetry, which is the combined application of charge conjugation with the parity ($P$) or ``mirror flip'' transformation. Cronin later wrote that the Gell-Mann and Pais article ``sends shivers up and down your spine, especially when you find you understand it,'' pointing out that it suggests a statement that is clearly an application of Curie's Principle (although Cronin does not call it that):
\begin{quote}
	You can push this a little bit further and see how CP symmetry comes in. The fact that CP is odd for a long-lived $K$ meson means that $K_L$ could not decay into a $\pi^+$ and a $\pi^-$. If it does --- and that was our observation --- then there is something wrong with the assumption that the CP quantum number is conserved in the decay. \citep[p.41]{CroninGreenw1982a}
\end{quote}
Here is that reasoning in a little more detail. When you create a beam of neutral $K$ mesons or ``kaons,'' the long-lived state $K_L$ is all that's left after the beam has traveled a few meters\footnote{\label{fn:long-short-life-kaons} The study of strong interactions had led to the identification of kaon particle and antiparticle states $K_0$ and $\bar{K}_0$ that are eigenstates of a degree of freedom called \emph{strangeness}. When testing for $CP$-violation, it is easier to study the superpositions $K_1=(K^0+\bar{K}^0)/\sqrt{2}$ and $K_2=(K^0-\bar{K})/\sqrt{2}$, since the lifetime of the latter is orders of magnitude longer. At the time, $K_2$ was identified as the ``long-life kaon state $K_L$.''}. This long-lived state had been discovered eight years earlier in the same laboratory by \citet{Lederman1956K_Ldiscovery}. And it was known that $K_L$ is \emph{not} invariant under the $CP$ transformation, whereas a two pion state $\pi^+\pi^-$ \emph{is} invariant under $CP$. The observation of such the asymmetric decay $K_L\rightarrow\pi^+\pi^-$, Cronin points out, could only be the result of a $CP$-violating law. That's just Curie's Principle.
\begin{figure}[tb]\begin{center}
\includegraphics[width=0.5\textwidth]{./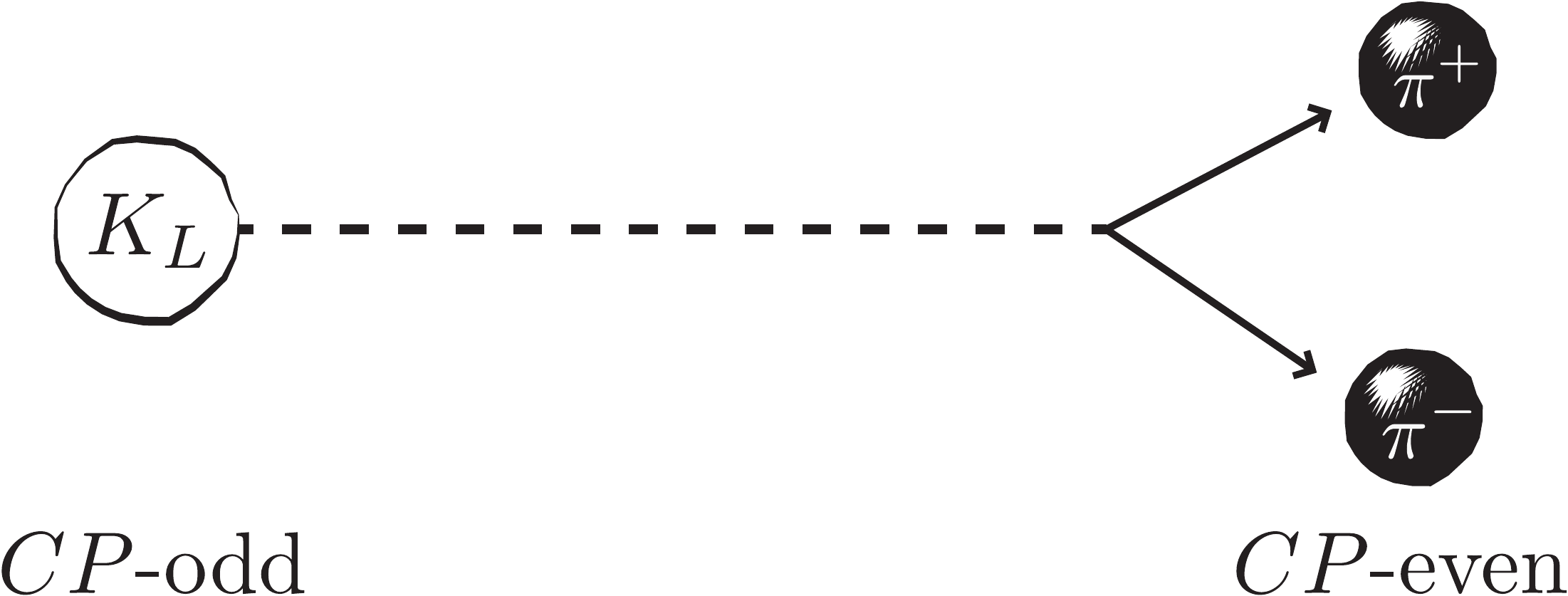}
\caption{The $K_L\rightarrow\pi^+\pi^-$ decay. By Curie's Principle, this asymmetry between an initial state and a final state implies an asymmetry in the laws.}\label{fig:kaon1}
\end{center}\end{figure}

The Cronin and Fitch experiment of 1964 involved firing a $K_L$ beam into a spark chamber at the Brookhaven National Laboratory, and taking photographs of thousands of particle decay events occurring over the course of about $10^{-10}$ seconds. Their ``Eureka moment'' was somewhat of a delayed reaction, as they labored for months analyzing all the particle events that they had photographed\footnote{The history of this discovery is recalled in a charming lecture given by Cronin at the University of Chicago and transcribed by \citep{CroninGreenw1982a}.}. But when the analysis was complete, they found that some of the $K_L$ kaons decayed into a pair of pions, $K_L\rightarrow\pi^+\pi^-$. This decay event was rare, occurring in only about one in every 500 of the recorded decays, but it was nonetheless unmistakable. The conclusion, by a simple application of Curie's Principle, was that the laws must be $CP$-violating. Cronin and Fitch told Abraham Pais about their exciting discovery over coffee at Brookhaven. Pais later wrote about their conversation that, ``[a]fter they left I had another coffee. I was shaken by the news'' \citep{pais1990a}. Cronin and Fitch were awarded the 1980 Nobel Prize for their discovery.

Of course, there were \emph{many} deep insights that led to the discovery of $CP$-violation. They included the discovery of the strangeness degree of freedom, the prediction of kaon-antikaon oscillations, the discovery of the long-lived $K_L$ state, the understanding of kaon regeneration, and many other things. But I hope to have shown here that, in skeletal form, the first argument for $CP$-violation is really a simple application of Curie's Principle.

	\subsection{The conclusion of $T$-violation}

The final step to the conclusion of $T$-violation now follows from the so-called $CPT$-theorem. Virtually all known laws of physics are invariant under the combined transformation of charge-conjugation ($C$), parity ($P$), and time reversal ($T$). Of course, the Hamiltonian governing the decay of the neutral kaon was not known in 1964, and so we could hardly just check whether it's $CPT$-invariant. But there was a theorem to assure us that, at least for quantum theory as we know it --- describable in terms of local (Wightman) fields, and a unitary representation of the Poincar\'e group --- the laws must be invariant under $CPT$. This result is called the $CPT$ \emph{theorem}, first proved in this form by \citet{jost1957cpt}, although arguments of a similar character were given by many others\footnote{For example, \citet{pauli1955} derives $CPT$ invariance as a corollary to the spin-statistics theorem, and \citet{BorchersYngvason2001a} derived it from the Haag axioms.}. And it straightforwardly implies that if $CP$ is violated, $T$ must be violated as well\footnote{$CPT$-invariance says that $(CPT)H=H(CPT)$, and thus that $CP(THT^{-1})=(H)CP$. So, if we have time reversal invariance, then the left-hand term $THT^{-1}$ gets set to $H$, and we immediately have $CP$-invariance, $CP(H)=(H)CP$. Equivalently, if $CP$ invariance fails, then so does time reversal invariance.}.

Thus, insofar as the premises of the $CPT$ theorem apply to our world, the Cronin and Fitch application of Curie's principle provides immediate evidence for $T$-violation as well.

\subsection{Mathematical underpinning}\label{subs:math-underpinning}

The statement of Curie's principle described above is not just a helpful folk-theorem. It can be given precise mathematical expression. Let me now try to make the mathematics more clear. I'll begin with a very simple mathematical statement of Curie's Principle in terms of unitary evolution, and then show how it can be carried over to scattering theory.

To begin, recall what it means for a law to be invariant under a linear symmetry transformation $R$.
\begin{defi}[invariance of a law]
A law of physics is \emph{invariant} under a linear transformation $R$ if whenever $\psi(t)$ is an allowed trajectory according to the law, then so is $R\psi(t)$.
\end{defi}
In the standard model of particle physics, interactions are assumed to evolve unitarily over time, by way of a continuous unitary group $\UU_t=e^{-itH}$, where $H$ is the Hamiltonian generator of $\UU_t$. In this context, the above definition of invariance is equivalent to
\begin{equation*}
	[H,R]=0
\end{equation*}
where $H$ again is the Hamiltonian and $R$ is linear. In these terms, we can give a first formulation of Curie's Principle as follows\footnote{A version of this fact was pointed out by \citet[Prop. 2]{earman2004curie}.}.

\begin{fact}[Unitary Curie Principle]\label{fact:unitary-curie}
	Let $\UU_t=e^{-itH}$ be a continuous unitary group on a Hilbert space $\HH$, and $R:\HH\rightarrow\HH$ be a linear bijection. Let $\psi_i\in\HH$ (an ``initial state'') and $\psi_f = e^{-itH}\psi_i$ (a ``final state'') for some $t\in\RR$. If either
	\begin{enumerate}
		\item (initial but not final) $R\psi_i = \psi_i$ but $R\psi_f \neq \psi_f$, or 
		\item (final but not initial) $R\psi_f = \psi_f$ but $R\psi_i \neq \psi_i$,
	\end{enumerate}
	then,
	\begin{enumerate}\setcounter{enumi}{2}   
		\item ($R$-violation) $[R,H]\neq0$.
	\end{enumerate}
\end{fact}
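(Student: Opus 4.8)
The plan is to argue by contraposition: assume the negation of the $R$-violation clause, namely $[R,H]=0$, and show that this rules out both the ``initial but not final'' and the ``final but not initial'' scenarios. All of the substance lies in a single reduction --- from commuting with the generator $H$ to commuting with the whole group $\UU_t$ --- after which each case closes in one line by chasing the orbit $t\mapsto\UU_t\psi_i$.

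So first I would establish the bridge lemma: $[R,H]=0 \Rightarrow R\,\UU_t = \UU_t\,R$ for every $t\in\RR$. When $H$ is bounded this is immediate, since $\UU_t = \sum_{n\geq 0} (-it)^n H^n / n!$ converges in operator norm, $R$ commutes with each $H^n$, hence with the norm-limit of the partial sums. For the physically relevant case of unbounded $H$ one has to be more careful about what ``$[R,H]=0$'' even asserts; the clean reading is the domain condition that $R$ maps $D(H)$ into itself with $RHx = HRx$ for all $x\in D(H)$, which for self-adjoint $H$ is equivalent --- via the spectral calculus, by first passing to a resolvent $(H-\lambda)^{-1}$ --- to $R\UU_t = \UU_t R$. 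One could also sidestep the generator entirely: the Definition of invariance given above is already phrased in terms of trajectories $\psi(t)=\UU_t\psi$, so $R\UU_t = \UU_t R$ is simply the natural statement that $R$ is a symmetry of the law, and $[H,R]=0$ is just its infinitesimal shorthand.

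Granting $R\UU_t = \UU_t R$, the two cases finish immediately. In case (1), from $R\psi_i = \psi_i$ we get $R\psi_f = R\UU_t\psi_i = \UU_t R\psi_i = \UU_t\psi_i = \psi_f$, contradicting $R\psi_f\neq\psi_f$. In case (2), since $\UU_t$ is a group element it is invertible with $\UU_t^{-1}=\UU_{-t}$, so $\psi_i = \UU_{-t}\psi_f$, whence $R\psi_i = R\UU_{-t}\psi_f = \UU_{-t}R\psi_f = \UU_{-t}\psi_f = \psi_i$, contradicting $R\psi_i\neq\psi_i$. Thus either of (1), (2) forces $[R,H]=0$ to fail, which is exactly (3). (Notice that the bijectivity of $R$ is not actually used here --- only the invertibility of $\UU_t$, which is automatic --- though it is part of what one wants from a ``linear symmetry'' and does no harm.) The only step that calls for real care is the bridge lemma in the unbounded case: spelling out the domain hypotheses and invoking the right form of Stone's theorem so that ``commutes with $H$'' genuinely upgrades to ``commutes with $\UU_t$.'' A lesser caution is that $R$ is assumed only linear and bijective, not unitary or bounded, so the argument must avoid any move --- e.g. taking adjoints --- that implicitly assumes boundedness of $R$; fortunately the orbit chase needs none.
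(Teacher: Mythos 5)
Your proposal is correct and follows essentially the same route as the paper: argue the contrapositive, upgrade $[R,H]=0$ to $[R,\UU_t]=0$, and then chase the orbit to show the initial state is $R$-invariant iff the final state is. The only difference is that you spell out the bridge lemma (including the unbounded-generator caveat) and treat the two cases separately where the paper compresses them into a single ``if and only if''; this is added care, not a different argument.
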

\begin{proof}
Suppose that $[R,H]=0$, and hence (since $R$ is linear) that $[R, e^{-itH}]=0$. Then $R\psi_i=\psi_i$ if and only if $R\psi_f = Re^{-itH}\psi_i = e^{-itH}R\psi_i = e^{-itH}\psi_i = \psi_f$. 
\end{proof}
This, again, is just a helpful first formulation. We have not yet arrived at a principle that is appropriate for the description of $CP$-violation. The claim of Cronin and Fitch was that in a neutral kaon scattering event, \emph{there is a particular decay mode} $K_L\rightarrow\pi^+\pi^-$ that occurs only if the laws are $CP$-violating $[CP,H]\neq0$. We have not yet given a rigorous formulation of \emph{that} application of Curie's Principle.

To get there, we first observe that it is enough for $CP$ to fail to commute with the $S$-matrix, $[CP,S]\neq0$. For, if a symmetry $R$ commutes with the ``free'' part of the Hamiltonian $[R,H_0]=0$ (which is true of most familiar symmetries, including $CP$), then by the definition of the $S$-matrix\footnote{\label{fn:s-violation-h-violation} One easy way to see this is to just look at the explicit Dyson expression of the $S$-matrix, \begin{equation}
	S = \mathcal{T}\exp\left(-i\int_{-\infty}^\infty dt V_I(t)\right),
\end{equation}
where $V_I$ is the interacting part of the Hamiltonian $H=H_0+V_I$, and $\mathcal{T}$ is the time-ordered multiplication operator \citep[p.73]{sakurai1994}. If $H=H_0+V_I$, then $[R,H_0]=0$ and $[R,H]=0$ implies that $[R,V_I]=[R,H-H_0]=[R,H]-[R,H_0]=0$. Thus, since $R$ is linear, we can pass it through the integral above to get that $RSR^{-1}=S$.},
\begin{equation*}
	[R,S]\neq0 \text{ only if } [R,H]\neq0.
\end{equation*}
Thus, by showing that the scattering matrix is $CP$-violating, one equally shows that the unitary dynamics are $CP$-violating as well. With this in mind, we can now state Curie's Principle in a form that is more appropriate for scattering theory.

\begin{fact}[Scattering Curie Principle]\label{fact:scattering-curie}
	Let $S$ be a scattering matrix, and $R:\HH\rightarrow\HH$ be a unitary bijection. If there exists any decay channel $\psi^{in}\rightarrow\psi^{out}$ such that either,
	\begin{enumerate}
		\item (in but not out) $R\psi^{in} = \psi^{in}$ but $R\psi^{out} =-\psi^{out}$, or 
		\item (out but not in) $R\psi^{out} = \psi^{out}$ but $R\psi^{in}=-\psi^{in}$,
	\end{enumerate}
then,
	\begin{enumerate}\setcounter{enumi}{2}   
		\item $[R,S]\neq0$.
	\end{enumerate}
Moreover, if $\UU_t=e^{-it(H_0+V)}$ is the associated unitary group, and if $R$ commutes with the free component $H_0$ of the Hamiltonian $H=H_0+V$, then ($R$-violation) $[R,H]\neq0$.
\end{fact}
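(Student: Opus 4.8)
The plan is to argue by contraposition at the level of the $S$-matrix, reading ``there exists a decay channel $\psi^{in}\rightarrow\psi^{out}$'' as the statement that the transition amplitude $\langle\psi^{out},S\psi^{in}\rangle$ is nonzero. So suppose $[R,S]=0$; since $R$ is a unitary bijection this is equivalent to $RSR^{-1}=S$. The whole argument then rests on a one-line sandwich identity: using unitarity of $R$, $\langle\psi^{out},S\psi^{in}\rangle=\langle R\psi^{out},RS\psi^{in}\rangle=\langle R\psi^{out},SR\psi^{in}\rangle$, the last equality by the commutation assumption.

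Now feed in the hypotheses. In case (1), substituting $R\psi^{in}=\psi^{in}$ and $R\psi^{out}=-\psi^{out}$ gives $\langle\psi^{out},S\psi^{in}\rangle=-\langle\psi^{out},S\psi^{in}\rangle$, so the amplitude vanishes; case (2) is the same computation with the in- and out-states exchanged. Either way the putative decay is forbidden, contradicting the hypothesis that it is a genuine channel, so $[R,S]\neq0$. It is worth flagging in the write-up that this is precisely where the eigenvalue $-1$ — rather than a mere $R\psi\neq\psi$ — is essential: it is the opposite-sign selection rule, not generic asymmetry, that annihilates the amplitude, which is exactly why the relevant configuration is $K_L$ (with $CP$-eigenvalue $-1$) decaying into $\pi^+\pi^-$ (with $CP$-eigenvalue $+1$). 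For the ``moreover'' clause I would simply invoke the implication already recorded just before the statement — that when $[R,H_0]=0$ one has $[R,S]\neq0$ only if $[R,H]\neq0$, which is the Dyson-expansion reasoning of Footnote~\ref{fn:s-violation-h-violation} ($[R,H_0]=0$ and $[R,H]=0$ force $[R,V]=0$, whence $R$ passes through the time-ordered exponential and $RSR^{-1}=S$). Combining this with the first part yields $[R,H]\neq0$ immediately.

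The genuine obstacle is not the algebra, which is a handful of lines, but the functional-analytic fine print that one must either discharge or explicitly set aside: scattering in- and out-states are generically improper, non-normalizable vectors, so $\langle\psi^{out},S\psi^{in}\rangle$ should be understood in a rigged-Hilbert-space / distributional sense rather than as a literal inner product; the Dyson series used for the ``moreover'' part is formal; and ``$[R,H]=0$'' for unbounded $H$ must be read as ``$R$ commutes with $e^{-itH}$,'' exactly as in Fact~\ref{fact:unitary-curie}. In keeping with the deliberately skeletal spirit of this section I would record these caveats in a short remark and carry the argument through at the level of formal manipulation, noting that each step is made rigorous in the standard treatments of scattering theory.
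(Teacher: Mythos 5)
Your proposal is correct and follows essentially the same route as the paper's own proof: the contrapositive, the unitary sandwich identity $\Inn{\psi^{out}}{S\psi^{in}} = \Inn{R\psi^{out}}{SR\psi^{in}}$, the sign flip from the eigenvalue conditions forcing the amplitude to vanish, and the ``moreover'' clause discharged by the Dyson-expansion reasoning of Footnote~\ref{fn:s-violation-h-violation}. Your added remarks on the essential role of the $-1$ eigenvalue and the distributional status of the in/out states are sensible glosses, not a different argument.
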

\begin{proof}
We prove the contrapositive; suppose that $[R,S]=0$. Since $R$ is unitary, $\Inn{\psi^{out}}{S\psi^{in}} = \Inn{R\psi^{out}}{RS\psi^{in}} = \Inn{R\psi^{out}}{SR\psi^{in}}$. So, if either the ``in but not out'' or the ``out but not in'' conditions hold, then,
\begin{equation*}
	\Inn{\psi^{out}}{S\psi^{in}} = \Inn{R\psi^{out}}{SR\psi^{in}} = -\Inn{\psi^{out}}{S\psi^{in}}.
\end{equation*}
Hence, $\Inn{\psi^{out}}{S\psi^{in}}=0$, which means that there can be no decay channel $\psi^{in}\rightarrow\psi^{out}$. Finally, we note that if $[R,H_0]=0$, then and $[R,S]\neq0$ implies that $[R,H]\neq0$ by the definition of the $S$-matrix.
\end{proof}
This, finally, is the precise mathematical statement of Curie's Principle that was applied by Cronin and Fitch. Taking $\psi^{in}=K_L$ and $\psi^{out}=\pi^+\pi^-$, they discovered a scattering event $\psi^{in}\rightarrow\psi^{out}$ that satisfies  ``out but not in'' for the transformation $R=CP$. It follows that the laws are $CP$-violating. And given $CPT$ invariance, it follows that they are $T$-violating as well.

	\subsection{Advantages and limitations}\label{subs:curie-advant-limit}

An obvious advantage of this approach to $T$-violation is that you don't have to know the laws to know that they are $T$-violating. At the time of its discovery in 1964, many of the structures appearing in the modern laws of neutral kaon decay were absent: there were no $W$ or $Z$ bosons, no Kobayashi-Maskawa matrix, and certainly no standard model of particle physics. All that came later. Nevertheless, Curie's Principle provided a surprisingly simple test that the laws are $T$-violating.

A more subtle advantage is that, as a test for $CP$ violation, Curie's Principle will likely continue hold water in non-unitary extensions of quantum theory\footnote{\citet{ashtekar2013threeroads} has formulated a version of Curie's principle that applies much more generally than the one I have stated here.}. Although unitary evolution is assumed in some of the background definitions, nothing about the argument from Curie's Principle requires the evolution be unitary. For example, the ``scattering version'' of Curie's principle in no way depends on the unitarity of the $S$-matrix; indeed, the conclusion that $[R,S]\neq0$ holds when $S$ is any Hilbert space operator whatsoever that connects $\psi^{in}$ and $\psi^{out}$ states. In this sense, the argument from Curie's principle is very general indeed.

The limitation is that it is an indirect test for $T$-violation, and one that we might not trust as we attempt to extend particle physics beyond the standard model. In particular, the reliance on the $CPT$ theorem is troubling. It is not implausible that $CPT$ invariance could fail as particle physics is extended beyond the standard model. For example, we might wish to consider a representation of the Poincar\'e group that is not completely unitary. In such cases, the $CPT$ theorem can fail, and thus so would the link between $CP$-violation and $T$-violation. It would be preferable to have a direct test of $T$-violation instead.

One might respond to this concern by trying to apply Curie's Principle directly to the case of $T$-violation. Unfortunately, that doesn't work. Recall that the statement of Curie's Principle above assumed the symmetry transformation was linear. This turns out to be a crucial assumption; Curie's Principle fails badly for antilinear symmetries like time reversal\footnote{For a discussion of this failure of Curie's principle, see \citet{roberts-2013a}.}. So, this road to $T$-violation is \emph{essentially} indirect. One can check directly for $CP$ violation, but only recover $T$-violation by the $CPT$ theorem. A direct test of $T$-violation will have to follow a completely different argument. That is the topic of the next section.

\section{$T$-Violation by Kabir's Principle}\label{sec:direct-t-violation}

New progress has recently been made in the understanding of $T$-violation. We now have evidence that appears to be much more direct. The first such evidence came with an experiment by \citet{AngelopoulosEtAl1998}, performed at the CPLEAR particle detector at CERN. Like the original $T$-violation experiment, this discovery involved the decay of neutral kaons. Things got even better when, just a few months ago now, yet another direct detection of $T$-violation was announced by the BaBar collaboration at Stanford \citep{LeesPoirea2012a}. This experiment involved the decay of a different particle, the neutral $B$ meson. It's an exciting time for the study of $T$-violation! But for our purposes, what's special about these new results is that the argument underlying them is completely different from that adopted by Cronin and Fitch. No application of Curie's Principle is needed.

What I would like to point out is that both recent detections of $T$-violation hinge on another symmetry principle. Let me call it \emph{Kabir's Principle}, since it was pointed out in an influential pair of papers by \citet{kabir1968nature,PhysRevD.2.540}. Unlike the Curie Principle approach to symmetry violation, this one is really built to handle antilinear transformations like time reversal. Here is how it works.

	\subsection{Kabir's Principle}

To begin, let me summarize the simple idea behind Kabir's Principle somewhat roughly.
\begin{quote}
	If a transition $\psi^{in}\rightarrow\psi^{out}$ occurs with different probability than the time-reversed transition $T\psi^{out}\rightarrow T\psi^{in}$, then the laws describing those transitions must be $T$-violating.
\end{quote}
This suggests a straightforward technique for checking whether or not an interaction is governed by $T$-violating laws. We set up a detector to check how often a particle decay $\psi_i\rightarrow\psi_f$ occurs (called its \emph{branching ratio}), and compare it to how often a the decay $T\psi_f\rightarrow T\psi_i$ occurs. Easier said than done, naturally. But if one occurs more often than the other, then Kabir's Principle says we have direct evidence of $T$-violation.

In the next subsection, I will sketch briefly how such a procedure was first carried out at CERN. I'll then discuss the precise mathematical formulation of Kabir's Principle.

\subsection{Application to $T$-violation}

The first direct detection of $T$-violation involved the decay of our friend the neutral kaon. So, let's return to the strangeness eigenstates $K^0$ and $\bar{K}^0$, which have strangeness eigenvalues $\pm1$, respectively. It is generally thought that, if strong interactions were all that governs the behavior of these states, then strangeness would be conserved. So, by the kind of arguments discussed above, you could never have a particle decay like $K^0\rightarrow\bar{K}^0$ with only strong interactions, because these states have different values of strangeness. However -- and this is another thing pointed out in the remarkable article by \citet{gell-mann-pais1955} -- when weak interactions are in play, there is no reason not to entertain decay channels that fail to conserve strangeness.

In fact, in the presence weak interactions, it makes sense to consider both $K^0\rightarrow\bar{K}^0$ and $\bar{K}^0\rightarrow K^0$ as possible decay modes. These particles could in principle bounce back and forth between each other, $K^0\rightleftarrows\bar{K}^0$, by a phenomenon called \emph{kaon oscillation}. This is a very exotic property, which applies to only a few known particles (one of them being the $B$ meson), and it is part of what makes neutral kaons so wonderfully weird.

Now, a convenient thing about oscillations between $K^0$ and $\bar{K}^0$ is that they are very easy to time reverse. In particular we can always set the phases\footnote{There is a great deal of freedom in choosing the phase conventions for the discrete transformations of $K^0$; see \citet[\S 9]{sachs1987} for a discussion.} so that,
\begin{align*}
	T\kaon = \kaon, \;\;\;\; T\antikaon = \antikaon.
\end{align*}
This allows us to apply Kabir's Principle in a particularly simple form: if we observe $K^0\rightarrow\bar{K}^0$ to occur with a different probability than $\bar{K}^0\rightarrow K^0$, then we have direct evidence for $T$-violation! This is precisely what was found at the CPLEAR detector, in showing that there is ``time-reversal symmetry violation through a comparison of the probabilities of $\bar{K}^0$ transforming into $K^0$ and $K^0$ into $\bar{K}^0$'' \citep{AngelopoulosEtAl1998}.

\begin{figure}[tb]\begin{center}
\includegraphics[width=0.9\textwidth]{./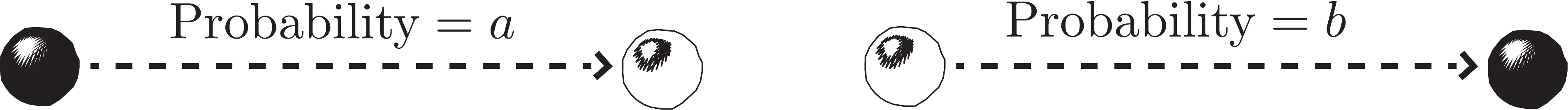}
\caption{Application of Kabir's Principle. If the decay $K^0\rightarrow\bar{K}^0$ happens more often than the time-reversed decay $\bar{K}^0\rightarrow K^0$, then the interaction is $T$-violating.}\label{fig:kaon2}
\end{center}\end{figure}

At this level of abstraction, it is the very same strategy that was used in the Stanford $T$-violation experiment with $B$ mesons. It turns out that neutral $B$ mesons can also oscillate between two states, $B^0\rightleftarrows B_-$. \citet{BernabMartin2012a} pointed out that if these transitions were to occur with different probabilities, then we would have $T$-violation. And this is just what was recently detected by \citet{LeesPoirea2012a} at Stanford. Thus, both the Stanford detection and the original CPLEAR detection $T$-violation were made possible by the abandonment of Curie's Principle, in favor of the more the more direct principle of Kabir.

\subsection{Mathematical Underpinning}

As with Curie's Principle, Kabir's Principle has a rigorous mathematical underpinning. But before getting to that, it's important to note the special way that unitary operators like the $\UU_t=e^{-itH}$ and the $S$-matrix transform under time reversal. The point where many get stuck is on the fact that $T$ is \emph{antiunitary}. This means that it conjugates the amplitudes, $\Inn{T\psi}{T\phi}=\Inn{\psi}{\phi}^*.$ It also means that it is \emph{antilinear}, in that it conjugates any complex number that we pass it over:
\begin{equation*}
	T(a\psi+b\phi)=a^*T\psi+b^*T\phi.
\end{equation*}
As a consequence, the condition of time reversal invariance that $[T,H]=0$ does \emph{not} imply that the unitary operator $\UU_t=e^{-itH}$ commutes with $T$. Instead, the complex constant picks up a negative sign. That is, for time reversal invariant systems,
\begin{equation*}
	T\UU_tT^{-1} = e^{-(-itTHT^{-1})}=e^{itH}=\UU_{-t}=\UU_t^{-1}.
\end{equation*}
Similarly, a unitary $S$-matrix describes a time-reversal invariant system if and only if $TST^{-1}=S^{-1}$.

We can now formulate a mathematical statement of Kabir's Principle. Note that, as discussed in Section \ref{subs:math-underpinning}, the failure of the $S$-matrix to be time reversal invariant ($TST^{-1}\neq S^{-1}$) implies $T$-violation in the ordinary sense ($T\UU_tT^{-1}\neq \UU_t^{-1}$).
\begin{fact}[Kabir's Principle]\label{fact:reversal-principle}
	Let $S$ be a unitary operator (the $S$-matrix) on a Hilbert space $\HH$, and let $T:\HH\rightarrow\HH$ be an antiunitary bijection. If,
	\begin{enumerate}
		\item (unequal amplitudes) $\Inn{\psi^{in}}{S\psi^{out}} \neq \Inn{T\psi^{out}}{ST\psi^{in}}$,
	\end{enumerate}
then,
	\begin{enumerate}\setcounter{enumi}{1}   
		\item ($T$-violation) $TST^{-1}\neq S^{-1}$.
	\end{enumerate}
\end{fact}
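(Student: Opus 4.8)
The plan is to prove the contrapositive: I assume the $S$-matrix is time-reversal invariant, which (as recorded just before the statement) means $TST^{-1}=S^{-1}$, and I show that this forces the two amplitudes in condition (1) to be equal, $\Inn{\psi^{in}}{S\psi^{out}} = \Inn{T\psi^{out}}{ST\psi^{in}}$, for every pair of states $\psi^{in},\psi^{out}$. Negating condition (2) thus negates condition (1), which is what we want.

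First I would turn the invariance hypothesis into an operator identity that lets me slide $S$ past $T$. Right-multiplying $TST^{-1}=S^{-1}$ by $T$ gives $TS=S^{-1}T$; left-multiplying by $S$ then gives $STS=T$, hence $ST = TS^{-1}$. In particular $ST\psi^{in} = TS^{-1}\psi^{in}$, so
\begin{equation*}
	\Inn{T\psi^{out}}{ST\psi^{in}} = \Inn{T\psi^{out}}{T\,S^{-1}\psi^{in}}.
\end{equation*}
Next I would invoke the two defining properties of the operators: antiunitarity of $T$ gives $\Inn{T\psi^{out}}{T\,S^{-1}\psi^{in}} = \Inn{\psi^{out}}{S^{-1}\psi^{in}}^{*} = \Inn{S^{-1}\psi^{in}}{\psi^{out}}$, and unitarity of $S$ (so $S^{-1}=S^{\dagger}$) together with the adjoint relation gives $\Inn{S^{-1}\psi^{in}}{\psi^{out}} = \Inn{S^{\dagger}\psi^{in}}{\psi^{out}} = \Inn{\psi^{in}}{S\psi^{out}}$. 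Chaining these equalities yields $\Inn{T\psi^{out}}{ST\psi^{in}} = \Inn{\psi^{in}}{S\psi^{out}}$, which is precisely the negation of (1), completing the contrapositive.

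I expect the only genuine subtlety to be the interplay between the antilinearity of $T$ and the ``inverse'' form of the invariance condition. It is exactly because $T$ is antiunitary rather than unitary that the correct invariance statement is $TST^{-1}=S^{-1}$ (mirroring $T\UU_tT^{-1}=\UU_t^{-1}$ from the preceding discussion), and it is exactly this inverse that conspires with the complex conjugation produced by antiunitarity and with the adjoint identity $S^{-1}=S^{\dagger}$ to hand back the original amplitude with its two arguments interchanged. Once the operator identity $ST=TS^{-1}$ is in place, everything else is routine bookkeeping with inner products, and no real obstacle remains.
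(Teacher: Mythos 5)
Your proof is correct and is essentially the paper's own argument: contrapositive, turn $TST^{-1}=S^{-1}$ into an operator identity to slide $S$ past $T$, then use antiunitarity of $T$ (conjugating and swapping the inner product) and unitarity of $S$ to recover the original amplitude with arguments interchanged. The only difference is cosmetic --- you start from $\Inn{T\psi^{out}}{ST\psi^{in}}$ via $ST=TS^{-1}$ while the paper starts from $\Inn{\psi^{out}}{S\psi^{in}}$ via $TS=S^{-1}T$, and your chain lands exactly on the negation of condition (1) as labelled in the statement.
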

\begin{proof}
	We argue the contrapositive. Assume $TST^{-1}=S^{-1}$. $T$ is antiunitary, so $\Inn{\psi^{out}}{S\psi^{in}} = \Inn{T\psi^{out}}{TS\psi^{in}}^*$. Thus, since $TS=S^{-1}T$ by assumption,
\begin{align}\label{eq:kabir-proof}
	\begin{split}
		\Inn{\psi^{out}}{S\psi^{in}} & = \Inn{T\psi^{out}}{S^{-1}T\psi^{in}}^*\\
									 & = \Inn{S^{-1}T\psi^{in}}{T\psi^{out}}\\
									 & = \Inn{T\psi^{in}}{ST\psi^{out}},
	\end{split}
\end{align}
where the last equality follows from our claim that $S$ is unitary.
\end{proof}

	\subsection{Advantages and limitations}
	
Kabir's Principle, like that of Curie, provides a way to show the laws are $T$-violating without actually knowing much about the laws themselves. But even better, it does so without recourse to the $CPT$ theorem. In this sense, Kabir's Principle stands a better chance of remaining valid in $CPT$-violating extensions of the standard model.

A limitation is that, unlike the Curie's Principle approach, Kabir's Principle only seems to work when the dynamics is unitary. As in Section \ref{subs:curie-advant-limit}, suppose we consider some non-unitary extension of the standard model. That is, suppose our dynamics is not described by a (linear) unitary operator $\UU_t$. Then the above argument for Kabir's Principle fails in the final step\footnote{Nevertheless, \citet{ashtekar2013threeroads} has shown that a weaker condition than full unitary is still enough to establish Kabir's Principle; all that is needed is a condition of compatibility with an overlap map, without any linear structure involved.}.

Thus, although the Kabir Principle applied by \citet{AngelopoulosEtAl1998} and \citet{LeesPoirea2012a} has the advantage of providing a direct test, it is not clear that it is general enough to apply without modification in the context of a non-unitary extension of quantum theory.

\section{$T$-violation by Wigner's principle}\label{sec:T-invariance-and-Kramers-degeneracy}

I'd like to finish with one final road to $T$-violation. It is perhaps the least well-known of all the approaches. In simplest terms, this road involves the search for exotic new properties of matter. Let me begin with a toy model of how this can lead to $T$-violation. I'll then turn to the general reasoning underpinning this approach, and finally show how this reasoning has been applied (unsuccessfully so far) in empirical tests.

\subsection{A toy example} An electric dipole moment typically describes the displacement between two opposite charges, or within a distribution of charges. But suppose that, instead of describing a distribution of charges, we imagine an electric dipole moment as a property of just one elementary particle. This particle might be referred to as an ``elementary'' electric dipole moment.

The existence of such a property has been entertained, for example as a property of the neutron, although it has not yet been detected. Let $H_0$ be the Hamiltonian describing the particle in the absence of interactions; let $J$ represent its angular momentum; and let $E$ be an electromagnetic field. Then these ``elementary'' electric dipoles are sometimes\footnote{\citep[See][]{Khrip-Lam1997cp,DallRitz2013a}} characterized by the Hamiltonian,
\begin{equation*}
	H = H_0 + J\cdot E.
\end{equation*}
Since time reversal preserves the free Hamiltonian $H_0$ and the electric field $E,$ but reverses angular momentum $J$, this Hamiltonian is manifestly $T$-violating: $[T,H]\neq0$. Therefore, an elementary electric dipole of this kind would constitute a direct detection of $T$-violation. No need for Curie's Principle. No need for Kabir's Principle. No need for the $CPT$ theorem.

Like the $T$-violating $K_L\rightarrow\pi^+\pi^-$ and $K^0\rightleftarrows\bar{K}^0$ decays, there are general principles underpinning this example of $T$-violation, too. In this case, they stem from the relationship between $T$-invariance and the degeneracy of the energy spectrum. The relevant relationship can be summarized as follows.

\subsection{Wigner's Principle}\label{subs:non-degeneracy-principle}

A system is called \emph{degenerate} if its Hamiltonian has distinct energy states with the same energy eigenvalue. An intuitive example of a degenerate system is the free particle on a string: the particle can either move to the left, or to the right, and have the same kinetic energy either way. When there are multiple distinct eigenstates with the same eigenvalue, those eigenstates are called \emph{degenerate states}.  \citet{kramers1930degeneracy} showed that an odd number of electrons can be expected to have a degenerate energy spectrum, and for this his name remains attached to that effect: Kramers Degeneracy\footnote{The reason people were interested in the first place, it seems, is that degeneracy was a key part of knowing how to studying very low temperature phenomena using paramagnetic salts \citep{Klein_Kramer}.}. But it was \citet{wigner1932kramer} showed the much deeper relationship between degeneracy and time reversal invariance.

For the purposes of understanding $T$-violation, the relevant relationship can be summarized as follows.
\begin{fact}[Wigner's Principle]\label{fact:non-degeneracy-principle}
	If there is an eigenstate of the Hamiltonian such that: (1) that state is non-degenerate, and (2) time reversal maps that state to a different ray, then we have $T$-violation, in that $[T,H]\neq0$.
\end{fact}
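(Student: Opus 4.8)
The plan is to argue by contraposition: assuming the law is time-reversal invariant, i.e. $[T,H]=0$, I will show that every non-degenerate eigenstate of $H$ is mapped by $T$ back into its own ray, which directly contradicts hypothesis~(2). So fix an eigenstate $\psi\in\HH$ with $H\psi=E\psi$, and assume $\psi$ is non-degenerate. Since a Hamiltonian is self-adjoint, the eigenvalue $E$ is real.

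Next I would push $T$ through the eigenvalue equation. Because $T$ is antilinear and, by assumption, commutes with $H$,
\begin{equation*}
	H(T\psi) = T(H\psi) = T(E\psi) = \bar{E}\,T\psi = E\,T\psi,
\end{equation*}
where the last step uses $\bar{E}=E$. Hence $T\psi$ is again an eigenstate of $H$ with the very same eigenvalue $E$. Now I invoke non-degeneracy in the sense of the footnote to the statement: there is no eigenstate orthogonal to $\psi$ with eigenvalue $E$, so the $E$-eigenspace is exactly the one-dimensional span of $\psi$. Since $T$ is a bijection, $T\psi\neq0$, and therefore $T\psi=c\psi$ for some nonzero $c\in\CC$; that is, $T\psi$ lies in the same ray as $\psi$, contradicting~(2). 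Thus $[T,H]\neq0$.

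The step I expect to require the most care is the interplay between antilinearity and the reality of the eigenvalue: precisely because $T$ conjugates scalars, one might worry that $T\psi$ is an eigenstate with eigenvalue $\bar{E}$ rather than $E$, and it is the self-adjointness of $H$ that makes the two coincide and rescues the argument. A secondary point worth stating explicitly is the passage from ``non-degenerate'' to ``one-dimensional eigenspace,'' which is immediate from the definition but is the hinge of the contradiction. For unbounded $H$ one should also note that $\psi$, being an eigenvector, lies in the domain of $H$ and that this domain is $T$-invariant when $[T,H]=0$; I would flag this but not belabor it, since it is routine.
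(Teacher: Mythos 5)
Your proof is correct and follows essentially the same route as the paper's (given as the proof of Proposition \ref{prop:reverse-wigner-generalized}): argue the contrapositive, assume $[T,H]=0$, use antilinearity together with the reality of the eigenvalue to conclude that $T\psi$ is an eigenvector with the same eigenvalue, and let non-degeneracy force $T\psi$ back into the ray of $\psi$, contradicting hypothesis (2). Your write-up is if anything slightly cleaner, since you make the $\bar{E}=E$ step explicit and work directly with the one-dimensionality of the eigenspace instead of the paper's appeal to an orthonormal eigenbasis on a finite-dimensional space.
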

We will see shortly how this fact has a simple proof deriving from the work of Wigner. But first, let me point out how it can be used to provide evidence for $T$-violation.

	\subsection{Application to $T$-violation}

We observed above that an appropriately weird Hamiltonian can provide an explicit and direct example of $T$-violation. The properties that these systems tend to share, it turns out, are just the properties described by Wigner's principle above. There are various examples that one could study here to illustrate. But let me spare the reader and give just one that is rather important, the elementary electric dipole moment.

The thing that is not obvious about the elementary electric dipole moment is that it satisfies part (1) of Wigner's principle whenever part (2) is satisfied. That is, time reversal acts non-trivially on all the energy eigenstates $\psi$ of the Hamiltonian ($T\psi\neq e^{i\theta}\psi$) that are non-degenerate. So, if for example one makes the common assumption that the stable ground state $\psi$ of an elementary particle is non-degenerate, then for an elementary electric dipole we also have that $T\psi\neq e^{i\theta}\psi$. It follows by Wigner's principle that this system is $T$-violating.

To begin, let me briefly introduce the elementary electric dipole moment\footnote{For more details, see \citet[\S 13.3]{ballentine1998}, \citet[\S XXI.31]{messiah1999}, or \citet[\S 4.2]{sachs1987}.}. It can be characterized as a system with the following three properties.
	\begin{enumerate}
		\item \emph{(Permanence)} There is an observable $D$ representing the dipole moment that is ``permanent'', in that $\Inn{\psi}{D\psi}=a>0$ for every eigenvector $\psi$ of the Hamiltonian $H$. Since this $\psi(t)$ does not change over time except for a phase factor, permanence means that $\Inn{\psi}{D\psi}=a$ has the same non-zero value for all times $t$, whence its name.
		\item \emph{(Isotropic Dynamics)} Assuming that we have elementary particle, its simplest interactions are assumed to be isotropic, in that time evolution commutes with all rotations, $[e^{-itH}, R_\theta]=0$. Note that if $J$ is the ``angular momentum'' observable that generates the rotation $R_\theta=e^{i\theta J}$, then this is equivalent to the statement that $[H,J]=0$.
		\item \emph{(Time Reversal Properties)} Time reversal, as always, is an antiunitary operator. It has no effect on the electric dipole observable ($TDT^{-1}=D$) when viewed as a function of position. But it does reverse the sign of angular momentum $(TJT^{-1}=-J)$, since spinning things spin in the opposite orientation when their motion is reversed.
	\end{enumerate}

	A system with these three properties turns out to satisfy condition (1) of Wigner's principle, that $T\psi\neq e^{i\theta}\psi$ for some eigenvector $\psi$ of $H$, whenever that $\psi$ is non-degenerate. To see why, assume (for reductio) that there is a non-degenerate eigenvector $\psi$ of $H$ satisfying  $T\psi=e^{i\theta}\psi$. We will show that the assumption that the dipole moment is ``permanent'' then fails, contradicting our hypothesis.

	Since $[H,J]=0$, and since $\psi$ is a non-degenerate eigenvector of $H$, it follows\footnote{Check: $H(J\psi)=JH\psi=h(J\psi)$, so $J\psi$ is an eigenvector of $H$ with eigenvalue $h$; thus, by non-degeneracy, $J\psi=e^{i\theta}\psi$, and so $\psi$ is an eigenvector of $J$ with eigenvalue $e^{i\theta}$ (where since $J$ is self-adjoint, $e^{i\theta}=\pm1$).} that $\psi$ is an eigenvector of $J$. By the Wigner-Eckart Theorem\footnote{A special case of this theorem states that for any fixed eigenvector of angular momentum, the matrix elements of a vector observable are proportional to those of angular momentum. \citep[See][\S 7.2, esp. page 199]{ballentine1998}.}, each such eigenvector satisfies,
	\begin{equation}\label{eq:wigner-eckart-dipole}
		\Inn{\psi}{D\psi}=c\Inn{\psi}{J\psi}
	\end{equation}
	for some $c\in\RR$. Applying the antiunitary time reversal operator $T$ to vectors on both sides we get that $\Inn{T\psi}{TD\psi}^*=c\Inn{T\psi}{TJ\psi}^*$, and hence $\Inn{T\psi}{TD\psi}=c\Inn{T\psi}{TJ\psi}$. But $T$ commutes with $D$ and anticommutes with $J$, so this equation may be written,
	\begin{equation}\label{eq:wigner-eckart-dipole-reversed}
		\Inn{T\psi}{D(T\psi)} = -c\Inn{T\psi}{J(T\psi)}
	\end{equation}
	Finally, we have assumed (for reductio) that $T\psi=e^{i\theta}\psi$ for some $e^{i\theta}$. Applying this to Equation \eqref{eq:wigner-eckart-dipole-reversed}, we get,
	\begin{equation*}
				   \Inn{\psi}{D\psi} = - c\Inn{\psi}{J\psi}.
	\end{equation*}
Comparing this to Equation \eqref{eq:wigner-eckart-dipole}, we see that $\Inn{\psi}{D\psi}=-\Inn{\psi}{D\psi}$, and hence that $\Inn{\psi}{D\psi}=0$. This contradicts our hypothesis that $D$ is permanent.
	
So, if the elementary electric dipole has a non-degenerate energy energy eigenvector $\psi$, then $T\psi\neq e^{i\theta}\psi$. Wigner's Principle thus guarantees that it is a $T$-violating system. Constructing such a system is part of an active search for $T$-violation.

There are many interesting things to say about this research; for a book-length treatment, see \citet{Khrip-Lam1997cp}. All I would like to point out for now is that this approach to $T$-violation hinges on Wigner's principle, which is distinct from all the other approaches to $T$-violation discussed so far.

	\subsection{Mathematical Underpinning}

As suggested above, Fact \ref{fact:non-degeneracy-principle} basically arises out of Wigner's discovery of a connection between time reversal and degeneracy for systems with an odd number of fermions. Here is how that connection leads to a principle for understanding $T$-violation.

Wigner began by noticing a strange fact about two successive applications of the time reversal operator $T$. When applied to a system consisting of an odd number of electrons, it does not exactly bring an electron back to where we started. Instead, it adds a phase factor of $-1$. Only by applying time reversal twice more can we return an electron to its original vector state. This is a curious property indeed! But there is no getting around it. It is effectively forced on us by the definition of time reversal and of a spin-$1/2$ system \citep{roberts2012a}.

This led Wigner to the following argument that the electron always has a degenerate Hamiltonian\footnote{Wigner's assumption of a finite-dimensional Hilbert space can be relaxed, as generalizations exist for Hamiltonians with a continuous energy spectrum as well \citep{roberts2012a}.}.

\begin{prop}[Wigner]\label{prop:wigner}
Let $H$ be a self-adjoint operator on a finite-dimensional Hilbert space, which is not the zero operator. Let $T:\HH\rightarrow\HH$ be an antiunitary bijection. If
	\begin{enumerate}
		\item (electron condition) $T^2=-I$, and
		\item ($T$-invariance) $[T,H]=0$
	\end{enumerate}
then,
	\begin{enumerate}\setcounter{enumi}{2}
		\item (complete degeneracy) every eigenvector of $H$ admits an orthogonal eigenvector with the same eigenvalue.
	\end{enumerate}
\end{prop}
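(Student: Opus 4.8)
The plan is to carry out the classical ``Kramers pair'' construction. Since $H$ is self-adjoint on a finite-dimensional space it has at least one eigenvector, so I would fix an arbitrary one, say $H\psi = h\psi$ with $h\in\RR$. The first step is to check that $T\psi$ lies in the \emph{same} eigenspace. Here the antilinearity of $T$ has to be tracked with care: pushing $T$ through $H$ using $[T,H]=0$ and then through the scalar gives $H(T\psi) = T(H\psi) = T(h\psi) = h^{*}\,T\psi = h\,T\psi$, the last equality because $h$ is real. Since $T$ is a bijection and $\psi\neq0$, the vector $T\psi$ is nonzero, hence a genuine eigenvector of $H$ with eigenvalue $h$.

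The second step, and the only place the ``electron condition'' $T^{2}=-I$ enters, is to show $T\psi\perp\psi$. Applying antiunitarity in the form $\Inn{T\phi}{T\chi}=\Inn{\phi}{\chi}^{*}$ with $\phi=\psi$ and $\chi=T\psi$ gives $\Inn{T\psi}{T^{2}\psi}=\Inn{\psi}{T\psi}^{*}$. Substituting $T^{2}\psi=-\psi$ on the left and using conjugate symmetry of the inner product on the right rewrites this as $-\Inn{T\psi}{\psi}=\Inn{T\psi}{\psi}$, so $\Inn{T\psi}{\psi}=0$. This orthogonality in particular forbids $T\psi$ from being a scalar multiple of $\psi$, so no separate non-triviality check is needed.

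Combining the two steps: for every eigenvector $\psi$ of $H$, the vector $T\psi$ is a nonzero eigenvector with the same eigenvalue and orthogonal to $\psi$ — which is exactly the asserted complete degeneracy. The remaining hypotheses play only a supporting role: finite-dimensionality (or some analogous spectral input) is what guarantees eigenvectors exist in the first place, and $T^{2}=-I$ incidentally forces the dimension to be even, so there is always room for the orthogonal partner. I do not expect a serious obstacle here; the entire content sits in the interplay of antiunitarity with $T^{2}=-I$, and the only thing to watch is getting the complex conjugations and the sign from $T^2=-I$ to line up correctly.
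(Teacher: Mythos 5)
Your proof is correct. The Kramers-pair computation $H(T\psi)=T(h\psi)=hT\psi$ (using $[T,H]=0$ and $h\in\RR$), followed by the antiunitarity identity $\Inn{T\psi}{T^{2}\psi}=\Inn{\psi}{T\psi}^{*}$ with $T^{2}\psi=-\psi$, does force $\Inn{T\psi}{\psi}=0$, so $T\psi$ is a nonzero orthogonal eigenvector with the same eigenvalue for \emph{every} eigenvector $\psi$. Note, though, that the paper never proves this proposition directly: it attributes it to Wigner and instead proves the generalization, Proposition \ref{prop:reverse-wigner-generalized}, in contrapositive form, where the hypothesis $T^{2}=-I$ is dropped in favour of the ray condition $T\psi\neq e^{i\theta}\psi$, and orthogonality is obtained from the dichotomy that the unit eigenvector $T\psi$ is either a phase multiple of $\psi$ or orthogonal to it, rather than from $T^{2}=-I$. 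To recover the present statement from that generalization one still needs exactly the observation your argument supplies implicitly: $T\psi=e^{i\theta}\psi$ would give $T^{2}\psi=e^{-i\theta}e^{i\theta}\psi=\psi$, contradicting $T^{2}=-I$. Your route builds that in from the start, exhibits the degenerate partner concretely as $T\psi$ itself, and sidesteps the slight looseness in the dichotomy step of the paper's general proof; the paper's route buys generality, since it applies to antiunitary $T$ with $T^{2}=+I$ provided some non-degenerate eigenvector is moved to a new ray. Your side remarks are also accurate: finite-dimensionality is used only to guarantee eigenvectors exist (the conclusion is vacuous otherwise), the hypothesis $H\neq0$ is not actually needed in your argument, and $T^{2}=-I$ does force even complex dimension.
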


That's a fine argument for degeneracy, when we are confident about time reversal invariance. But what if we are interested in systems that are $T$-violating? No problem. We can just interpret Wigner's result in the following equivalent form.

\begin{coro}\label{coro:reverse-wigner}
Let $H$ be a self-adjoint operator on a finite-dimensional Hilbert space, which is not the zero operator. Let $T:\HH\rightarrow\HH$ be an antiunitary bijection. If
	\begin{enumerate}
		\item (electron condition) $T^2=-I$, and
		\item (non-degeneracy) there is an eigenvector of $H$ such that every eigenvector orthogonal to it has a different eigenvalue,
	\end{enumerate}
then,
	\begin{enumerate}\setcounter{enumi}{2}
		\item ($T$-violation) $[T,H]\neq0$.
	\end{enumerate}
\end{coro}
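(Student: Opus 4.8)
The plan is to read Corollary~\ref{coro:reverse-wigner} as nothing more than the contrapositive of Proposition~\ref{prop:wigner}, so that the genuine work is already contained in (or can simply be quoted from) Wigner's result. Concretely, I would argue by \emph{reductio}. Suppose $[T,H]=0$. The other hypotheses of Proposition~\ref{prop:wigner} remain in force: $T$ is an antiunitary bijection with $T^2=-I$, and $H$ is a nonzero self-adjoint operator on a finite-dimensional Hilbert space. Hence Proposition~\ref{prop:wigner} applies and delivers complete degeneracy --- \emph{every} eigenvector of $H$ admits an orthogonal eigenvector with the same eigenvalue. But hypothesis~(2) of the corollary asserts precisely the negation of this: it gives an eigenvector of $H$ such that every eigenvector orthogonal to it has a \emph{different} eigenvalue. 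This contradiction forces $[T,H]\neq0$, which is the asserted $T$-violation.

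For completeness I would also recall why Proposition~\ref{prop:wigner} itself holds, since that is the step where the antilinearity of $T$ genuinely does the work (a finite-dimensional version of the Kramers argument). Let $\psi$ be any eigenvector, $H\psi=h\psi$; self-adjointness gives $h\in\RR$. Because $[T,H]=0$ and $T$ is antilinear, $H(T\psi)=T(H\psi)=T(h\psi)=h(T\psi)$, so $T\psi$ is again an eigenvector with eigenvalue $h$, and $T\psi\neq 0$ since $T$ is a bijection. The one nonroutine point is that $T\psi\perp\psi$: applying antiunitarity, $\Inn{T(T\psi)}{T\psi}=\Inn{\psi}{T\psi}^{*}=\Inn{T\psi}{\psi}$, while the left side equals $\Inn{T^{2}\psi}{T\psi}=\Inn{-\psi}{T\psi}=-\Inn{\psi}{T\psi}$; comparing, $\Inn{\psi}{T\psi}=0$. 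Thus $T\psi$ is the required orthogonal eigenpartner, establishing complete degeneracy and hence, by contraposition, the corollary.

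The main obstacle is essentially organizational rather than mathematical: there is no difficulty at the level of the corollary once Proposition~\ref{prop:wigner} is in hand. The only things worth being careful about are (i) that $T^{2}=-I$ is used twice over --- it is both a hypothesis carried into the corollary and the crux of the orthogonality identity $\Inn{\psi}{T\psi}=0$ --- and (ii) that the finite-dimensionality hypothesis (or some spectral substitute) is doing real work, since it is what guarantees $H$ has eigenvectors for the Kramers trick to act on in the first place; one should check that relaxing it, as the footnote to Proposition~\ref{prop:wigner} indicates, still leaves the argument intact.
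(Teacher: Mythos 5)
Your main argument is exactly the paper's: the corollary is read as the contrapositive of Proposition~\ref{prop:wigner}, whose hypotheses (antiunitary $T$ with $T^2=-I$, nonzero self-adjoint $H$ on a finite-dimensional space) are untouched by the corollary, so assuming $[T,H]=0$ yields complete degeneracy and contradicts hypothesis~(2). That is precisely how the paper intends the corollary to be established, so the proof stands.

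One small correction in your supplementary sketch of Proposition~\ref{prop:wigner} (which the paper itself leaves unproved): the antiunitarity step has a stray conjugation. With the paper's convention $\Inn{T\phi}{T\chi}=\Inn{\phi}{\chi}^*$, taking $\phi=T\psi$, $\chi=\psi$ gives $\Inn{T(T\psi)}{T\psi}=\Inn{T\psi}{\psi}^*=\Inn{\psi}{T\psi}$, not $\Inn{T\psi}{\psi}$ as you wrote. As written, your chain only yields $\overline{\Inn{\psi}{T\psi}}=-\Inn{\psi}{T\psi}$, i.e.\ that the real part vanishes; with the corrected identity you get $\Inn{\psi}{T\psi}=-\Inn{\psi}{T\psi}$ and hence $\Inn{\psi}{T\psi}=0$, which is what the Kramers orthogonality argument requires. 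This does not affect the corollary, whose proof uses Proposition~\ref{prop:wigner} only as stated.
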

This means that Wigner's result is actually a toy strategy for testing $T$-violation in disguise! Suppose we discover an electron described by a non-degenerate Hamiltonian. Then we will have achieved a direct detection of $T$-violation.

There is a more general sort of reasoning at work here. It turns out that the $T^2=-I$ condition is stronger than is really needed to prove the result. The following generalization, which otherwise follows Wigner's basic argument, is available.
   
\begin{prop}[Wigner's Principle]\label{prop:reverse-wigner-generalized}
 Let $H$ be a self-adjoint operator on a finite-dimensional Hilbert space, which is not the zero operator. Let $T$ be an antiunitary bijection. If there exists an eigenvector $\psi$ of $H$ such that,
		\begin{enumerate}
			\item $T\psi\neq e^{i\theta}\psi$ for any complex unit $e^{i\theta}$, and
			\item every eigenvector orthogonal to $\psi$ has a different eigenvalue,
		\end{enumerate}
	then,
		\begin{enumerate}\setcounter{enumi}{2}
			\item ($T$-violation) $[T,H]\neq0$
		\end{enumerate}
\end{prop}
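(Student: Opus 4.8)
The plan is to prove the contrapositive, following the skeleton of Wigner's own argument (Proposition~\ref{prop:wigner}) but with hypothesis~(1) playing the role that the electron condition $T^2=-I$ plays there. So I would assume, for reductio, that $[T,H]=0$, fix the eigenvector $\psi$ supplied by the hypotheses, and write $H\psi=h\psi$.

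The first step is to observe that $T\psi$ is again an eigenvector of $H$ with the \emph{same} eigenvalue. Since $H$ is self-adjoint, $h\in\RR$; and since $T$ is antilinear and commutes with $H$, we get $H(T\psi)=T(H\psi)=T(h\psi)=\bar h\,T\psi=h\,T\psi$. Also $T\psi\neq0$, because $T$ is a bijection (equivalently, antiunitaries are norm-preserving, so $\|T\psi\|=\|\psi\|\neq0$). Thus $T\psi$ lies in the eigenspace $V_h:=\ker(H-hI)$ alongside $\psi$.

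The heart of the matter is to extract from hypothesis~(2) that $V_h$ is just the line $\CC\psi$. Indeed $V_h$ is a linear subspace containing $\psi$; were its dimension at least two, the orthogonal complement of $\psi$ within $V_h$ would contain a nonzero vector $w$, and this $w$ would be an eigenvector of $H$ orthogonal to $\psi$ yet sharing its eigenvalue $h$ --- contradicting~(2). Hence $V_h=\CC\psi$, so $T\psi=\lambda\psi$ for some $\lambda\in\CC$, and $|\lambda|=1$ since $T$ preserves norms. That is, $T\psi=e^{i\theta}\psi$, which flatly contradicts hypothesis~(1). Therefore $[T,H]\neq0$.

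I expect the one genuinely substantive step to be the collapse $V_h=\CC\psi$, which is exactly where the non-degeneracy hypothesis earns its keep; the rest (real eigenvalue, $T\psi\neq0$, $|\lambda|=1$) is routine bookkeeping about antiunitary operators of the sort already used in the proof of Fact~\ref{fact:reversal-principle}. It is worth remarking that in this direction neither the finite-dimensionality of $\HH$ nor the stipulation $H\neq0$ is actually needed; both are carried over only to match the setting of Proposition~\ref{prop:wigner} from which the result is derived.
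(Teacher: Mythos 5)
Your proof is correct and follows essentially the same route as the paper's: argue the contrapositive, use $[T,H]=0$ and the reality of $h$ to place $T\psi$ in the eigenspace of $\psi$, and let non-degeneracy collapse that eigenspace to the ray $\CC\psi$, contradicting hypothesis (1). If anything, your projection-onto-the-orthogonal-complement step is a slightly cleaner way of getting the dichotomy the paper extracts from the spectral theorem, and your closing remark that neither finite-dimensionality nor $H\neq0$ is needed for this direction is accurate.
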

\begin{proof}
	We prove the contrapositive, by assuming (3) fails, and proving that there exists an vector for which either (1) or (2) fails as well. Let $H\psi=h\psi$ for some $h\neq0$ and some eigenvector $\psi$ of unit norm. Since $T$ is antiunitary, $T\psi$ will also have unit norm.
	
Suppose (3) fails, and hence that $[T,H]=0$. Then $H(T\psi)=TH\psi=h(T\psi)$. This means that if $\psi$ is any eigenvector of $H$ with eigenvalue $h$, then $T\psi$ is an eigenvector with the same eigenvalue. By the spectral theorem, the eigenvectors of $H$ form an orthonormal basis set. So, since $\psi$ and $T\psi$ are both unit eigenvectors, either $T\psi=e^{i\theta}\psi$ or $\Inn{T\psi}{\psi}=0$. The latter violates condition (2), and the former violates the condition (1). Therefore, either (1) or (2) must fail.
\end{proof}

What I am calling Wigner's Principle is thus a simple generalization of Wigner's insight into Kramers degeneracy. And it is this very principle that provides that basic analytic grounding for our final road to $T$-violation.

\subsection{Advantages and Limitations} Wigner's Principle provides a test for $T$-violation without appeal to any fancy phenomena like neutral kaon decay. Good old electromagnetic interactions are enough, if exotic properties of matter like an elementary electric dipole happen to exist. The criterion is very simple: if time reversal takes a non-degenerate energy eigenstate to a distinct ray, then we have $T$-violation.

A disadvantage is that it is harder to apply Wigner's Principle outside the context of standard quantum mechanics and quantum field theory. Degeneracy is a concept that finds its most natural home in quantum theory, and it is essential to the Wigner's Principle \citep[but for a discussion of its generalization, see][]{roberts2013}. The principle also requires us to know when a system admits an appropriate (non-degenerate) energy eigenstate, which may require more detailed knowledge about the Hamiltonian of a system than the other two roads to $T$-violation.

\section{Conclusion}\label{sec:conclusion}

The three roads to $T$-violation each rely on a distinct symmetry principle. The first road, which employs Curie's Principle and the $CPT$ theorem, is by necessity indirect. That's because of the curious fact that Curie's Principle only holds for linear symmetries like $CP$-violation, and not for antilinear symmetries like time reversal. For a more direct test, one can take the second route and apply Kabir's Principle. This restores the possibility of a direct detection of $T$-violation, and indeed has been employed with great success in recent years. For a final test, one can take a third road and apply Wigner's principle. This again allows for a direct test of $T$-violation, which is not contingent on the premises of the $CPT$ theorem, although it requires knowing more about the form of the Hamiltonian.

The way Curie's Principle and Kabir's Principle have been formulated here, it seems at first blush that both routes rely on the assumption of a unitary dynamics. The first approach does so not with Curie's Principle -- it doesn't require unitarity -- but in the application of the $CPT$ theorem. The second approach does so in the application of our formulation of Kabir's Principle. This leads to the appearance that, in extensions of the standard model that relax the assumption of unitarity, we may lose our best existing evidence for $T$-violation. Thus, moving forward, the question of whether $T$-violation will remain an explicit feature of the \emph{fundamental} laws appears, for the moment, to be an open one. This leads immediately to an open question of whether these principles can be formulated in a more general framework, which includes some plausible extensions of the standard model. For an answer, the reader is referred to \citep{ashtekar2013threeroads}.

\bibliographystyle{kp}
\bibliography{/Users/bryanwroberts/Dropbox/ElectronicLibrary/MasterBibliography}
\end{document}